\documentclass[journal]{IEEEtran}

\usepackage[T1]{fontenc}
\usepackage[utf8]{inputenc}
\usepackage[pdftex]{graphicx}
\graphicspath{{./figures}}
\usepackage{cite}
\usepackage{amsthm,amsmath,amssymb,amsfonts}
\usepackage{float}
\usepackage{epstopdf}
\usepackage{enumitem}
\usepackage{tikz}
\usetikzlibrary{shapes,arrows,positioning,calc,fit}
\usepackage{booktabs}
\usepackage{siunitx}

\newtheorem{assumption}{Assumption}

\newtheorem{remark}{Remark}
\newtheorem{theorem}{Theorem}

\usepackage{xcolor} \usepackage{hyperref}
\hypersetup{
    colorlinks=true,
    linkcolor={blue!50!black},
    citecolor={blue!50!black},
    urlcolor={blue!50!black}
}

\usepackage{fancyhdr}

\newcommand{\R}{\mathbb{R}} 
\newcommand{\N}{\mathbb{N}} 
\newcommand{\Ni}[2]{\N_{#1}^{#2}} 
\newcommand{\cc}[1]{{\mathcal{#1}}} 

\newcommand{\T}{\top} 
\newcommand{\becauseof}[2][=]{\stackrel{\scriptstyle\mkern-1.5mu#2\mkern-1.5mu}{#1}}

\def\cP{\cc{P}} 
\def\cS{\cc{S}} 
\def\cC{\cc{C}} 
\def\nx{{n_x}} 
\def\nu{{n_u}} 
\def\nz{{n_z}} 
\def\ny{{n_y}} 
\def\cX{\cc{X}} 
\def\cY{\cc{Y}} 
\def\cZ{\cc{Z}} 
\def\cU{\cc{U}} 
\def\cO{\cc{O}} 
\def\ck{\kappa} 
\def\cD{\cc{D}} 
\def\nth{{n_\theta}} 
\def\nxi{{n_\xi}} 

\begin{document}

\title{\LARGE \bf Learning practically stabilizing output-feedback nonlinear controllers}

\author{Kui~Xie,~Pablo~Krupa,~Alberto~Bemporad%
\thanks{
This work was funded by the European Union (ERC Advanced Research Grant COMPACT, No. 101141351). Views and opinions expressed are however those of the authors only and do not necessarily reflect those of the European Union or the European Research Council. Neither the European Union nor the granting authority can be held responsible for them.
The authors are with the IMT School for Advanced Studies, Lucca, Italy.
Corresponding author: Kui~Xie.
Emails: {\tt \{kui.xie, pablo.krupa, alberto.bemporad\}@imtlucca.it}.
}
}

\pagestyle{fancy}
\maketitle
\thispagestyle{fancy}

\begin{abstract}
This paper addresses the problem of learning an output-feedback surrogate controller offline that approximates a given, possibly computationally expensive, nonlinear controller-observer pair.
The surrogate is modeled as a recurrent dynamical system and is trained to imitate closed-loop input/output trajectories generated by the given controller.
Beyond imitation accuracy, the offline training problem promotes input-to-state practical stability by incorporating estimated state trajectories to learn a candidate Lyapunov function.
The approach is validated on a nonlinear continuous stirred tank reactor, where constraint satisfaction and practical stability are assessed through a probabilistic validation approach.
The numerical results highlight the benefit of jointly learning the Lyapunov function by comparing against an imitation-only baseline.
\end{abstract}

\begin{IEEEkeywords}
Controller learning, nonlinear model predictive control, extended Kalman filtering, Lyapunov stability.
\end{IEEEkeywords}

\section{Introduction} \label{sec:intro}

Nonlinear model predictive control (NMPC), combined with state estimators, provides a principled output-feedback control framework to handle nonlinear dynamics, constraints, and tracking requirements in advanced control applications~\cite{RMD17}. In particular, controller-observer architectures such as NMPC coupled with an extended Kalman filter (EKF) are a popular choice when the plant state is not directly measured~\cite{LR94,HPB12}. Their practical deployment, however, may be hindered by the computational burden of repeatedly updating the EKF and solving the nonlinear NMPC optimization problem online. This is especially relevant in embedded and fast-sampling applications.

These considerations have motivated a broad literature on reducing the online complexity of MPC.
A classical alternative is explicit MPC, which removes the need for online optimization by precomputing the control law offline, but is typically limited to relatively small systems and linear or piecewise-affine prediction models~\cite{AB09,RMD17}.
As an alternative, a large literature has investigated {\it approximate} MPC representations.
Approximate MPC seeks to substitute the exact MPC control law with a computationally cheap function,
such as a neural network (NN), returning similar outputs for each given current state and reference inputs~\cite{CSA18, HKT18, KL19, MDW20, KL20, KAL21, FG23, TH24, ATS+24, HKZ25, TFZ+25}.
As an additional alternative, the authors in~\cite{LL24} propose learning a NN imitating an iterative solver that converges to the KKT conditions of the NMPC optimization problem.

Several efforts were devoted to guaranteeing closed-loop properties of approximate MPC, for example by combining robust MPC design with bounds or statistical validation of the controller approximation error~\cite{HKT18,ATS+24,HKZ25,TFZ+25}, computing worst-case approximation errors of explicit MPC through mixed-integer~\cite{FG23} and global nonlinear optimization~\cite{Bem26b}, or validating closed-loop performance probabilistically~\cite{KL19,KAL21}.
We note that deterministic guarantees provided by the aforementioned articles require knowledge of the system state, which is not available in an NMPC-EKF setting. 
On the other hand, probabilistic guarantees are provided in~\cite{KL19,KAL21} when considering the use of an estimated state, at the expense of requiring a large number of closed-loop experiments using the learned approximate MPC controller.

\begin{figure}[t!]
    \centering
    \begin{tikzpicture}[auto, node distance=2cm, >=latex',
        block/.style = {draw, rectangle, minimum height=1cm, minimum width=1.5cm, align=center, fill=white},
        sum/.style = {draw, circle, inner sep=1pt, node distance=1.5cm},
        input/.style = {coordinate},
        tmp/.style = {coordinate}
    ]
    
    
        \node [block, fill=blue!10] (plant) {Plant $\mathcal{P}~\eqref{eq:plant}$};
        
        \node [block, fill=green!10, left=3.0cm of plant] (expert) {Expert Pair $\cC$~\eqref{eq:expert}\\$(\ck,\cO)$};
        
        \node [block, fill=red!10, below=1.0cm of expert] (surrogate) {Surrogate $\cS~\eqref{eq:surr_control}$\\ \scriptsize (to be trained)};
        
        \node [block, fill=gray!10, dashed] (loss) at (plant |- surrogate) {Loss Minimization~\eqref{eq:learn_prob_stabilize}\\$J(\cD_t; \theta) + \cdots$};
    
        
        \draw [->] (expert) -- node[name=u] {$u_k$} coordinate[midway] (u_wire) (plant);
        
        \draw [->] (plant.south) -- node[midway, right, yshift=-2pt] {$y_k$} ++(0,-0.5) -| coordinate (y_point) (expert.south);
        
        
        \draw [->] (y_point) -- (surrogate.north);
        \fill (y_point) circle (1.5pt); 
    
        \draw [->] (surrogate) -- node [name=u_hat_node] {$\hat{u}_k$} (loss);
    
        \draw [->] (u_wire) -- ++(0,-1.25) -| (loss.north);
        \fill (u_wire) circle (1.5pt); 
    
        \draw [->, dashed] (loss.south) -- ++(0,-0.5) -| node[pos=0.25, above=0.05cm] {\scriptsize Learn $\theta$ Offline} (surrogate.south);
    
        \node [below=0.05cm of u] {\scriptsize Expert Action};
        \node [below=0.05cm of u_hat_node] {\scriptsize Predicted Action};
        
    \end{tikzpicture}
    \caption{The surrogate controller $\cS$ is trained to imitate the input/output behavior of the expert controller $\ck$ and observer $\cO$ (e.g., NMPC+EKF). The training involves minimizing the error $J(\cD_t; \theta)$ between the expert control input $u_k$ and the surrogate output $\hat{u}_k$, and other terms, given the same measurement history~$y_k$.}
    \label{fig:surrogate_control}
\end{figure}

In this paper we consider an approximate MPC approach that directly takes the system outputs as the input to the surrogate controller, bypassing the need for state estimation.
Instead of approximating the NMPC control law as a static map from state to control input, we propose to train a dynamic surrogate controller from input/output trajectories to imitate the observer-controller function from output measurements to NMPC control actions. The general surrogate-learning setup is illustrated in Fig.~\ref{fig:surrogate_control}. 

Beyond imitation accuracy, we train the surrogate controller to promote practical stability properties
of the closed-loop system.
We propose to jointly learn a candidate Lyapunov function serving as a stability-oriented regularizer during surrogate training. This connects the proposed approach to learning Lyapunov functions to certify stability of nonlinear systems in closed-loop with NN-based controllers~\cite{RBK18,AAG21}.
In contrast to these works, our purpose is not to certify the learned surrogate over the full domain, but to regularize a recurrent output-feedback surrogate controller learned from the closed-loop input/output trajectories obtained by an expert controller-observer pair. We consider a relaxed non-monotone Lyapunov condition~\cite{AP08} to provide additional flexibility while retaining a simple candidate Lyapunov function.
We show that satisfaction of this non-monotone Lyapunov descent condition leads to input-to-state practical stability (ISpS)~\cite{LAR+09} under bounded measurement noise.

The main contributions are as follows: \emph{(i)} we formulate surrogate learning directly at the input/output level for nonlinear controller-observer pairs, thereby addressing output-feedback implementations in which the plant state is unavailable; \emph{(ii)} we propose a stability-oriented training objective that combines imitation and Lyapunov-based regularization through a relaxed two-step decrease condition; and \emph{(iii)} we show that this two-step decrease condition leads to ISpS.
Additionally, we provide numerical results in which we learn an NMPC-EKF surrogate controller for a nonlinear continuous stirred-tank reactor (CSTR).
We highlight the effectiveness of the proposed approach using probabilistic validation~\cite{KAL21}, showing that we can jointly learn a surrogate controller and a candidate Lyapunov function that guarantee ISpS with high probability.

The remainder of the paper is organized as follows. Section~\ref{sec:problem} formulates the control problem. Section~\ref{sec:regulation} presents the output-feedback surrogate and the stabilization-oriented learning framework, and provides the conditions for ISpS. Section~\ref{sec:results} presents the CSTR case study.
Section~\ref{sec:conclusions} concludes the paper.

\subsubsection*{Notation}
The set of real (non-negative) numbers is denoted by $\R$ ($\R_{\geq 0}$).
The set of natural numbers is denoted by $\N$, with $\Ni{a}{b}\doteq\{a,a+1,\ldots,b\}$ for $a \le b$.
Vector inequalities are componentwise. We denote by
$I_n$ and $^\top$, respectively, the identity matrix and transposition.
A function $\alpha \colon \R_{\geq 0} \to \R_{\geq 0}$ is of class $\cc{K}$ if it is continuous, strictly increasing and $\alpha(0) = 0$; and is of class $\cc{K}_\infty$ if $\alpha \in \cc{K}$ and $\alpha(a) \to \infty$ as $a \to \infty$.
For $v \in \R^n$, $\|v\|_2$ and $\|v\|_\infty$ denote the Euclidean and infinity norms, respectively, while $\|v\|_Q^2 \doteq v^\top Qv$ for a symmetric $Q$.

\section{Problem formulation} \label{sec:problem}

Consider a plant $\cP$ described by the discrete-time dynamics
\begin{equation} \label{eq:plant}
\cP:\, \begin{cases}
\begin{aligned}
    x_{k+1} &= f_\cP(x_k, u_k), \\
    y_k     &= g_\cP(x_k) + \eta_k, 
\end{aligned}
\end{cases}
\end{equation}
where $x_k \in \R^{\nx}$, $u_k \in \R^{\nu}$, and $y_k \in \R^{\ny}$ are, respectively, the state, input, and output of the plant at time step $k$.
The measurement noise $\eta_k \in \R^{\ny}$ is assumed to form an i.i.d.\ sequence.
We assume that $\cP$ is observable and controllable, that $y_k$ can be measured, and that $x_k$ is not available.

The control objective is to steer the output $y_k$ to a given reference $\bar{y} \in \R^{n_y}$, while satisfying state and input constraints $x_k \in \cX$ and $u_k \in \cU$, $\forall k$, where $\cX \subset \R^{\nx}$ is a compact set and $\cU \doteq \{u \in \R^{n_u} \colon \underline{u} \leq u \leq \overline{u}\}$, with $\underline{u} < \overline{u}$.
We assume that $\bar y$ admits a unique admissible steady-state pair $(x_s,u_s)\in\cX\times\cU$ defined by
\begin{equation} \label{eq:steady_state}
\bar{y} = g_\cP(x_s), \quad x_s = f_\cP(x_s, u_s).
\end{equation}

We consider the existence of an \emph{expert controller} $\cC$ that has been tuned to provide a certain desired closed-loop dynamics but is computationally expensive to evaluate in real time.
In particular, we consider $\cC$ to be given by
\begin{equation} \label{eq:expert}
\cC:\, \begin{cases}
\begin{aligned}
    \chi_{k} &= \cO(\chi_{k-1}, y_k, u_{k-1}),\quad \hat{x}_k = h(\chi_{k}), \\
    u_k &= \ck(\hat{x}_k, \bar{y}),
\end{aligned}
\end{cases}
\end{equation}
where $\cc{O} \colon \R^{n_\chi} \times \R^{n_y} \times \R^{n_u} \to \R^{n_\chi}$ is an observer whose internal state $\chi_k$ is updated using the output measurement $y_k$ and past control input $u_{k-1}$, and provides the estimated state~$\hat{x}_k$ of~\eqref{eq:plant} through some function $h \colon \R^{n_\chi} \to \R^{n_x}$; and $\ck \colon \R^{n_x} \times \R^{n_y} \to \cU$ is a (nonlinear) control law that provides the input $u_k$.

Additionally, we assume that $\cP$ is initially operated using some \emph{initialization procedure},
i.e., by an \emph{initialization controller}
\begin{equation} \label{eq:init_control}
\cC_0:\, \begin{cases}
\begin{aligned}
    \chi_{k} &= \cO(\chi_{k-1}, y_k, u_{k-1}),\quad \hat{x}_k = h(\chi_{k}), \\
    u_k &= \ck_0(\hat{x}_k),\quad k=-M,-M+1,\ldots,-1,
\end{aligned}
\end{cases}
\end{equation}
where $\ck_0 \colon \R^{n_x} \to \cU$ is the \emph{initialization control law}
and $M>0$.
We define $k = 0$ the sample time when $\cC_0$ is switched to $\cC$.
We assume that the initialization procedure drives the system to some unknown state $x_0 \in \cX_0$, where $\cX_0 \subseteq \cX$ is an unknown compact set with $x_s \in \cX_0$. 
The task of the \emph{expert controller} is to drive the system from $x_0$ to the reference
steady state $x_s$.
We note that the use of an \emph{initialization procedure}, whose control law differs from the one used during normal operations, is a typical approach in many real-life applications~\cite{HSS07,MHDIS12}.
One of the purposes of this initialization procedure is to ensure that the predicted state $\hat{x}_0$ provided by the observer $\cO$ is a reasonable estimate of the real system state $x_0$ when the controller is switched to $\ck$.

A classical example of a computationally expensive control architecture~\eqref{eq:expert} is an NMPC-EKF controller, where the internal state $\chi_k$ is formed by the estimated state of the system and the \emph{a posteriori} estimate of the covariance matrix, and the evaluation of $\ck$ is done by solving a (nonlinear) finite-horizon constrained optimal control problem, cf. Section~\ref{sec:results:expert_architecture}.

We assume that $\cC$ has been designed to satisfy the control objective of steering $y_k$ toward $\bar{y}$ while satisfying the system constraints.
We do not assume that the domain of attraction to $x_s$ provided by $\cC$ is the entire set $\cX_0$ necessarily.
Instead, we assume that it has been tuned to provide a closed-loop behavior that is \emph{desirable} and that generally satisfies the control objectives.
That is, in practice we have that $\chi_k$ converges to a fixed value $\chi_\infty \in \R^{n_\chi}$ and that $y_k$ converges to a bounded set around $\bar{y}$ as $k \to \infty$.
We provide a loose definition of a \emph{desirable} closed-loop behavior of the controller $\cC$
on purpose, as it is problem-dependent.
As motivation, note that in a practical setting it may be difficult to design an NMPC-EKF controller with stability guarantees, e.g., computing a suitable terminal invariant set for the NMPC controller may be intractable.
However, an NMPC-EKF without formal stability guarantees may still lead to practical stability and constraint satisfaction.

\section{Approximate controller-observer} \label{sec:regulation}

In this section we present a method to learn a surrogate controller of $\cC$ jointly with a Lyapunov function for the resulting closed-loop system.
The idea is to learn a computationally cheap input/output surrogate controller that provides practical stability toward $\bar{y}$, even when the expert controller $\cC$ is not perfectly stabilizing by design.
In the following, we present the surrogate controller and the stabilization-oriented learning problem, and we show that satisfaction of the learned two-step non-monotone Lyapunov descent condition implies ISpS.

\subsection{Surrogate controller of a controller-observer pair}
We consider a \emph{surrogate controller} $\cS$ given by
\begin{equation} \label{eq:surr_control}
\cS:\,
    \begin{cases}
        \begin{aligned}
            z_{k+1} &= f_z(z_k,y_k,\bar{y};\theta), \\
            \hat{u}_{k} &= f_u(z_k,y_k,\bar{y};\theta),
        \end{aligned}
    \end{cases}
\end{equation}
where $z_k\in \R^{n_z}$ is the internal state, $\hat{u}_k \in \R^{\nu}$, and $\theta \in \R^{\nth}$ collects the parameters defining $f_z\colon \R^{\nz} \times \R^{\ny} \times \R^{\ny} \to \R^{\nz}$ and $f_u\colon \R^{\nz} \times \R^{\ny} \times \R^{\ny} \to \R^{\nu}$.
The control law of $\cS$ is $u_k = \hat{u}_k$.
The structure of the surrogate controller~\eqref{eq:surr_control} is motivated by the expert controller-observer architecture~\eqref{eq:expert}.
In particular,~\eqref{eq:surr_control} has an internal state $z_k$, the measured output $y_k$ as its input, and the control action $\hat{u}_k$ as its output.

We propose a particular structure for the functions of the surrogate controller~\eqref{eq:surr_control}, where we restrict $f_z$ and $f_u$ to satisfy~
\begin{subequations} \label{eq:surr_out_sets}
\begin{align}
    &f_z(z,y,\bar y;\theta)\in\cZ, \label{eq:surr_out_sets:z} \\
    &f_u(z,y,\bar y;\theta)\in\cU, \label{eq:surr_out_sets:u}
\end{align}
\end{subequations}
$\forall \theta \in \R^{\nth}$, $y \in \R^{n_y}$, $z \in \cZ$, with $\cZ \doteq \{z\in\R^{\nz} \colon \| z \|_\infty \leq \zeta \}$, for some scalar $\zeta > 0$.
We impose~\eqref{eq:surr_out_sets:u} so that the control law satisfies the input constraints, and~\eqref{eq:surr_out_sets:z} to limit the internal states of the surrogate controller.
Let $v_k \doteq (z_k, y_k, \bar{y}) \in \R^{n_v}$, $n_v = n_z + n_y + n_y$.
We impose~\eqref{eq:surr_out_sets} by taking
\begin{subequations} \label{eq:surr_structure}
   \begin{align}
       f_z(v_k; \theta) &= \zeta \tanh \left(\hat{f}_z(v_k; \theta_z) \right), \label{eq:nn_fz}\\
       f_u(v_k; \theta) &= \min\{\overline{u},\, \max\{\underline{u}, \, \hat{f}_u(v_k; \theta_u)\}\}, \label{eq:nn_fu}
   \end{align} 
\end{subequations}
where $\theta = (\theta_z, \theta_u)$, with $\theta_z$ the parameters of $\hat{f}_z \colon \R^{n_v} {\to \R^{n_z}}$, $\theta_u$ the parameters of $\hat{f}_u \colon \R^{n_v} \to \R^{n_u}$, and the hyperbolic tangent function $\tanh$ is taken componentwise. Since $z_k$ has no physical meaning, since now on we take for convenience $\zeta = 1$, i.e., $\cZ = [-1, 1]^{n_z}$.

The objective is to learn $\theta$ so that, under the same measurement history, the surrogate output $\hat{u}_k$ is as close as possible to the expert input $u_k$, thereby approximating the closed-loop input/output behavior provided by the expert controller $\cC$.
This is illustrated by Fig.~\ref{fig:surrogate_control}.
Starting from some (unknown) $x_0 \in \cX_0$, we initialize $z_0 = 0$ as a reasonable arbitrary initialization procedure satisfying $z_0 \in \cZ$.
Our objective is for the output $y_k$ of the closed-loop system to converge to the reference $\bar{y}$.
Therefore, $(x_k, u_k)$ should converge to $(x_s, u_s) \in \cX \times \cU$ satisfying~\eqref{eq:steady_state}.
To achieve this, the internal state $z_k$ of $\cS$ must converge to some $z_s \in \cZ$ satisfying
\begin{equation}\label{eq:equilibrium}
    z_s = f_z(z_s,\bar y,\bar y;\theta),
    \quad
    u_s = f_u(z_s,\bar y,\bar y;\theta),
\end{equation}
so that if $y_k = \bar{y}$, $x_k = x_s$ and $z_k = z_s$, then $x_{k+1} = x_s$.
We note that the requirement that $z_k$ converges to some fixed point $z_s$ is analogous to the internal state $\chi_k$ of the observer $\cO$ of the expert controller~\eqref{eq:expert} converging to some $\chi_\infty$.

The closed-loop system formed by~\eqref{eq:plant} and~\eqref{eq:surr_control} is given by
\begin{subequations} \label{eq:cl_explicit}
\begin{align}
    x_{k+1} &= f_\cP\!\left(x_k, f_u(z_k, y_k, \bar{y};\theta)\right), \\
    z_{k+1} &= f_z(z_k, y_k, \bar{y};\theta), \\
    y_k &= g_\cP(x_k) + \eta_k,
\end{align}
\end{subequations}
For convenience, we denote by $\xi_k \doteq (x_k,z_k) \in \R^{\nxi}$, $\nxi = n_x + n_z$, and $\tilde{x}_k \doteq x_k - x_s, \; \tilde{z}_k \doteq z_k - z_s, \; \tilde{\xi}_k \doteq \xi_k - \xi_s,$ where $\xi_s \doteq (x_s, z_s)$. Then,~\eqref{eq:cl_explicit} can be written in shifted coordinates as:
\begin{equation} \label{eq:cl}
    \tilde{\xi}_{k+1} = f_\cS(\tilde{\xi}_k, \eta_k; \theta) 
    \doteq F_\cS(\tilde{\xi}_k + \xi_s, \eta_k; \theta) - \xi_s,
\end{equation}
where $F_\cS(\xi_k, \eta_k; \theta)$ denotes the closed-loop dynamics in the original (non-shifted) coordinates.

To encourage practical stability, we propose jointly learning $\theta$ and a Lyapunov function for the closed-loop dynamics~\eqref{eq:cl}~\cite{KB59,Kha02}.
In the present imitation-learning setting, we use a relaxed two-step Lyapunov condition~\cite{AP08} to provide flexibility when using a simple structure for the candidate Lyapunov function.
We seek a scalar $\tau \geq 0$ and uniformly continuous functions $V, S\colon \R^\nxi \to \R_{\geq 0}$ satisfying
\begin{equation} \label{eq:Lyap_cond_2}
    \tau V(\tilde{\xi}_{k+2}) + V(\tilde{\xi}_{k+1}) - (1+\tau) V(\tilde{\xi}_k) \leq - S(\tilde{\xi}_k)
\end{equation}
for all extended states $\xi_k$ belonging to trajectories of~\eqref{eq:cl} starting from $z_0 = 0$ and with states $x_0 \in \cX_0$ obtained from the initialization procedure~\eqref{eq:init_control}.
When $\tau=0$, \eqref{eq:Lyap_cond_2} reduces to the standard one-step monotonic Lyapunov descent, cf.~\cite[Appendix B.3]{RMD17}.
Positive values of $\tau$ permit temporary increases in $V$ while still enforcing an overall two-step decrease.

In the absence of measurement noise, i.e., $\eta_k = 0$, $\forall k$, it is easy to verify that the two-step Lyapunov descent condition~\eqref{eq:Lyap_cond_2} can be equivalently expressed using a standard single-step monotonic Lyapunov descent condition.
Indeed, let
\begin{equation} \label{eq:one_step_Lyap_W}
    W(\tilde{\xi}) \doteq (1+\tau) V(\tilde{\xi}) + \tau V(f_\cS(\tilde{\xi},0; \theta)).
\end{equation}
Then, denoting $W_k \doteq W(\tilde{\xi}_k)$, $V_k \doteq V(\tilde{\xi}_k)$ and $S_k \doteq S(\tilde{\xi}_k)$,
\begin{equation*}
    W_{k+1} - W_k = \tau V_{k+2} + V_{k+1} - (1 + \tau) V_k \leq - S_k \leq 0,
\end{equation*}
showing that satisfaction of~\eqref{eq:Lyap_cond_2} implies $\tilde{\xi}_k \to 0$ as ${k \to \infty}$, cf.~\cite[Theorem 2.1]{AP08}.
The following theorem extends this to the presence of measurement noise $\eta_k$, showing that satisfaction of a two-step Lyapunov descent condition implies ISpS; see~\cite[Definition 6]{LAR+09}.
For convenience, we define
\begin{equation}
    \Gamma_k \doteq \tau V_{k+2} + V_{k+1} - (1+\tau) V_k + S_k,
\end{equation}
where $\tilde{\xi}_{k+1} = f_\cS(\tilde{\xi}_k, \eta_k; \theta)$ and $\tilde{\xi}_{k+2} = f_\cS(\tilde{\xi}_{k+1}, \eta_{k+1}; \theta)$.

\begin{assumption}[Conditions for ISpS] \label{ass:ISpS}
Assume that: 
\begin{enumerate}[label=(\textit{\roman*}),itemsep=0pt,topsep=0pt]
    \item \label{ass:ISpS:K-infty} There exist $\alpha_1,\alpha_2,\alpha_3\in\cc{K}_\infty$ such that
    \begin{equation} \label{eq:ISpS_bounds}
        \alpha_1(\|\tilde{\xi}\|)\leq V(\tilde{\xi})\leq \alpha_2(\|\tilde{\xi}\|), \quad
        S(\tilde{\xi})\geq \alpha_3(\|\tilde{\xi}\|).
    \end{equation}
    \item \label{ass:ISpS:Lipschitz} Functions $f_\cP$, $g_\cP$, $f_z$ and $f_u$ are Lipschitz continuous, and~\eqref{eq:equilibrium} is satisfied.
    \item $\eta_k \in \cc{W}_\eta$, $\forall k$, where $\cc{W}_\eta \subset \R^{n_y}$ is a compact set containing the origin.
\end{enumerate}
\end{assumption}

\begin{theorem}[ISpS of the surrogate controller] \label{theo:ISpS}
Let Assumption~\ref{ass:ISpS} be satisfied.
Let $\Omega = \{ \{\tilde{\xi}_k\}_{k = 0}^{\infty} : \tilde{\xi}_k \in \text{trajectory of } \eqref{eq:cl}, \text{ with } x_0 \in \cX_0, z_0 = 0, \eta_k \in \cc{W}_\eta \}$.
Assume that $x_k \in \cX$, $\forall \{\tilde{\xi}_k\}_{k=0}^{\infty} \in \Omega$.
Then~\eqref{eq:cl} is ISpS if there exist a function $\alpha_\eta \in \cc{K}$ and a scalar $c \geq 0$ such that
\begin{equation} \label{eq:V_ISpS_cond}
\Gamma_k \leq \alpha_\eta(\|(\eta_{k+1}, \eta_k)\|) + c, \quad \forall \{\tilde{\xi}_k\}_{k=0}^{\infty} \in \Omega.
\end{equation}
\end{theorem}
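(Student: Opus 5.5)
The plan is to reduce the two-step condition to a one-step ISpS-Lyapunov inequality for an augmented function, and then invoke the standard ISpS-Lyapunov result \cite[Definition 6 and the associated Lyapunov characterization]{LAR+09}. Mirroring the noise-free construction~\eqref{eq:one_step_Lyap_W}, I would define a noise-free look-ahead function
\[
W(\tilde{\xi}) \doteq (1+\tau) V(\tilde{\xi}) + \tau V(f_\cS(\tilde{\xi},0;\theta)),
\]
and verify that $W$ is an ISpS-Lyapunov function along trajectories in $\Omega$.

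\textbf{Step 1: sandwich bounds for $W$.} The lower bound $W(\tilde{\xi}) \geq (1+\tau)\alpha_1(\|\tilde{\xi}\|)$ is immediate from Assumption~\ref{ass:ISpS}\ref{ass:ISpS:K-infty}. For the upper bound, Assumption~\ref{ass:ISpS}\ref{ass:ISpS:Lipschitz} together with~\eqref{eq:equilibrium} gives $f_\cS(0,0;\theta)=0$ and $\|f_\cS(\tilde{\xi},\eta;\theta)\| \leq L_\xi\|\tilde{\xi}\| + L_\eta\|\eta\|$, since the closed loop is a composition of Lipschitz maps. Hence
\[
W(\tilde{\xi}) \leq (1+\tau)\alpha_2(\|\tilde{\xi}\|)+\tau\alpha_2(L_\xi\|\tilde{\xi}\|),
\]
which is of class $\cc{K}_\infty$.

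\textbf{Step 2: one-step decrease with noise.} Along a trajectory,
\[
W_{k+1}-W_k = \tau V(f_\cS(\tilde{\xi}_{k+1},0)) + V_{k+1} - (1+\tau)V_k .
\]
Adding and subtracting $\tau V_{k+2}$, the condition~\eqref{eq:V_ISpS_cond} yields
\[
W_{k+1}-W_k \leq -\alpha_3(\|\tilde{\xi}_k\|) + \alpha_\eta(\|(\eta_{k+1},\eta_k)\|) + c + \tau\big[V(f_\cS(\tilde{\xi}_{k+1},0)) - V(f_\cS(\tilde{\xi}_{k+1},\eta_{k+1}))\big].
\]
The bracket is the main obstacle. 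I would control it by uniform continuity of $V$ on a compact set. Since $x_k\in\cX$ (compact), $z_k\in\cZ$ (compact) and $\cc{W}_\eta$ is compact, all arguments remain in a compact set, and Lipschitzness gives
\[
\|f_\cS(\tilde{\xi}_{k+1},0)-f_\cS(\tilde{\xi}_{k+1},\eta_{k+1})\|\leq L_\eta\|\eta_{k+1}\|.
\]
Uniform continuity of $V$ then provides a modulus $\sigma\in\cc{K}$ with the bracket bounded by $\sigma(L_\eta\|\eta_{k+1}\|)$. If a class-$\cc{K}$ modulus is awkward to obtain, I would instead absorb a constant bound into $c$, using that $V$ is bounded on the compact reachable set; this is enough for practical stability.

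\textbf{Step 3: conclude.} The two steps give
\[
W_{k+1}-W_k\leq -\alpha_3(\|\tilde{\xi}_k\|)+\tilde{\alpha}_\eta\big(\sup_j\|\eta_j\|\big)+c,
\]
with $\tilde{\alpha}_\eta\in\cc{K}$ built from $\alpha_\eta$ and $\sigma$, using $\|(\eta_{k+1},\eta_k)\|\leq \sqrt2\,\sup_j\|\eta_j\|$. This is exactly the dissipation form of an ISpS-Lyapunov function. The standard comparison argument (cf.~\cite{LAR+09}, or \cite[Appendix B]{RMD17}) then yields
\[
\|\tilde{\xi}_k\|\leq\beta(\|\tilde{\xi}_0\|,k)+\gamma\big(\sup_j\|\eta_j\|\big)+d
\]
for all trajectories in $\Omega$, i.e., ISpS of~\eqref{eq:cl}. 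One care point is that the Lyapunov inequality holds only on $\Omega$, not globally. The comparison argument must therefore be run along trajectories in $\Omega$, which is valid because $\Omega$ is closed under time shift of trajectories.
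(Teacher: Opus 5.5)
Your proof is correct in outline, modulo one slip in Step 2 noted below, but it takes a different route from the paper's.

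\textbf{Comparison with the paper.} The paper uses the \emph{noise-dependent} look-ahead $W(\tilde{\xi};\eta)=(1+\tau)V(\tilde{\xi})+\tau V(f_\cS(\tilde{\xi},\eta;\theta))$, evaluated with the realized noise. Then $W(\tilde{\xi}_k;\eta_k)=(1+\tau)V_k+\tau V_{k+1}$, so the increment is exactly $\Gamma_k-S_k$, and the hypothesis~\eqref{eq:V_ISpS_cond} plugs in with no continuity argument. The cost appears in the sandwich bound instead. The upper bound on $W$ depends on $\eta$, which the paper splits off with the weak triangle inequality for $\cc{K}$ functions \cite[Corollary 10]{S98}. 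It then absorbs that term into an additive constant $c_1$ by compactness of $\cc{W}_\eta$, which the ISpS setting tolerates. Your noise-free $W$ is instead a genuine state function with clean $\cc{K}_\infty$ bounds and no constant. The price is that the decrease now carries noise-mismatch terms. You control these with the uniform continuity of $V$, which the paper assumes but never visibly uses. The paper's argument is shorter; yours matches the state-only format of an ISpS-Lyapunov function more literally.

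\textbf{Correction in Step 2.} Your identity $W_{k+1}-W_k=\tau V(f_\cS(\tilde{\xi}_{k+1},0;\theta))+V_{k+1}-(1+\tau)V_k$ is valid only when $\eta_k=0$. Since $V_{k+1}=V(f_\cS(\tilde{\xi}_k,\eta_k;\theta))$ rather than $V(f_\cS(\tilde{\xi}_k,0;\theta))$, the exact increment also contains the term $\tau\bigl[V(f_\cS(\tilde{\xi}_k,\eta_k;\theta))-V(f_\cS(\tilde{\xi}_k,0;\theta))\bigr]$. This term is of the same type as your bracket and is bounded by $\tau\,\omega(L_\eta\|\eta_k\|)$ with the same modulus. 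The conclusion therefore survives, but the term must be included.

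\textbf{Time-shift claim.} $\Omega$ is not closed under time shifts: a shifted trajectory starts from $z_m\neq 0$ and possibly from $x_m\notin\cX_0$. You do not need this property. The bound~\eqref{eq:V_ISpS_cond} holds at every $k$ along each trajectory in $\Omega$, so the comparison argument can be run along a fixed trajectory from $k=0$. Equivalently, the set of states reachable from $\cX_0\times\{0\}$ is robustly positively invariant and the decrease holds on it, which is the set on which the ISpS-Lyapunov result of \cite{LAR+09} can be invoked.
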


\begin{proof}
By Assumption~\ref{ass:ISpS}\ref{ass:ISpS:Lipschitz}, $f_\cS(0,0;\theta)=0$ and $f_\cS$ is Lipschitz continuous; let $L_f$ be the Lipschitz constant of $f_\cS$.
Let
\begin{equation} \label{eq:one_step_Lyap_W_noise}
    W(\tilde{\xi}; \eta) \doteq (1+\tau) V(\tilde{\xi}) + \tau V(f_\cS(\tilde{\xi},\eta; \theta)),
\end{equation}
be the evaluation of~\eqref{eq:one_step_Lyap_W} conditioned on $\eta \in \cc{W}_\eta$.
We prove that satisfaction of~\eqref{eq:V_ISpS_cond} implies ISpS by showing that $W(\tilde{\xi}; \eta)$ is an ISpS-Lyapunov function for $f_\cS$, see~\cite[Definition 7]{LAR+09}.
$\forall \tilde{\xi} \in \Omega$, $\forall \eta \in \cc{W}_\eta$, we have that $W(\tilde{\xi}; \eta) \geq (1 + \tau)\alpha_1(\|\tilde{\xi}\|)$ 
by~\eqref{eq:ISpS_bounds}, and that
\begin{align*}
    W(\tilde{\xi}; \eta)
    &\becauseof[\leq]{(*)} (1 + \tau)\alpha_2(\|\tilde{\xi}\|) + \tau\alpha_2(\|f_\cS(\tilde{\xi},\eta;\theta)\|) \\
    &\becauseof[\leq]{(\star)} (1 + \tau)\alpha_2(\|\tilde{\xi}\|) + \tau\alpha_2(L_f\|(\tilde{\xi}, \eta)\|) \\
    &\becauseof[\leq]{(\dagger)} (1 + \tau)\alpha_2(\|\tilde{\xi}\|) + \tau\alpha_2(L_f\|\tilde{\xi}\|+L_f\|\eta\|) \\
    &\becauseof[\leq]{(\dagger)} (1 + \tau)\alpha_2(\|\tilde{\xi}\|) + \tau\alpha_2(2 L_f\|\tilde{\xi}\|) + \tau \alpha_2(2 L_f\|\eta\|) \\
    &\becauseof[\leq]{(\ddagger)} (1 + \tau)\alpha_2(\|\tilde{\xi}\|) + \tau c_\sigma \sigma(\|\tilde{\xi}\|) + \tau c_\sigma \sigma(\|\eta\|)
\end{align*}
where $(*)$ follows from~\eqref{eq:ISpS_bounds},
$(\star)$ follows from $f_\cS(0, 0; \theta) {=} 0$ and Lipschitz continuity of $f_\cS$, 
$(\dagger)$ from the triangle inequality,
and the existence of $c_\sigma > 0$ and $\sigma \in \cc{K}_\infty$ in step $(\ddagger)$ from~\cite[Corollary 10]{S98}.
Since $\cc{W}_\eta$ is bounded and $\sigma \in \cc{K}_\infty$, $\exists c_1 \geq 0 \colon \tau c_\sigma \sigma(\|\eta\|) \leq c_1, \forall \eta \in \cc{W}_\eta$.
Using $S(\tilde{\xi}) \geq \alpha_3(\|\tilde{\xi}\|)$ in~\eqref{eq:ISpS_bounds}, ISpS then follows from satisfaction of~\eqref{eq:V_ISpS_cond}~\cite[Thm. 3]{LAR+09}.
\end{proof}

\subsection{Learning a surrogate controller with practical stability}\label{sec:stability_prob}
To learn $\theta$, $V$ and $S$, we perform $M_t$ experiments obtained from plant~\eqref{eq:plant} in closed-loop with~\eqref{eq:expert} under the constant reference $\bar{y}$.
Each trajectory starts from an unknown $x_0 \in \cX_0$ as a result of the initialization procedure~\eqref{eq:init_control}.
We collect the input-output data of each experiment, as well as the estimated states returned by the observer $\cO$, in the dataset
\begin{equation}
    \cD_t \doteq \{ \{(u_k^j, y_k^j, \hat{x}^j_k)\}_{k=0}^{N_t^j-1}\}_{j=1}^{M_t},
\end{equation}
where $N_t^j > 1$ is the length of each experiment $j \in \N_1^{M_t}$.

\vspace*{0.2em}\noindent\textbf{Imitation objective:}
For a given $\theta$, we associate to each $\{y_k^j\}_{k = 0}^{N_t - 1}$ in $\cD_t$ sequences $\{z_k^j, \hat{u}_k^j\}_{k=0}^{N_t^j - 1}$, satisfying~
\begin{subequations} \label{eq:predicted_surr}
    \begin{align}
        z_0^j &= 0,  \label{eq:predicted_surr:z0} \\
        z_{k+1}^j &= f_z(z_k^j,y_k^j,\bar{y};\theta), \; k \in \Ni{0}{N_t^j-2}, \label{eq:predicted_surr:zk} \\
        \hat{u}_k^j &= f_u(z^j_k, y^j_k,\bar{y};\theta), \; k \in \Ni{0}{N_t^j-1}, \label{eq:predicted_surr:u}
    \end{align}
\end{subequations}
predicting the evolution of the internal state and control input of the surrogate controller for each observed output sequence.
The objective is for $\{\hat{u}_k^j\}$ to match $\{u_k^j\}$ when provided the output measurement sequence $\{y_k^j\}$, so that the surrogate controller~$\cS$ provides the same result as the expert controller~$\cC$.
We can pose this as a learning problem whose imitation loss has the form of a system identification problem based on input/output data~\cite{SL19,Bem23,Bem25}.
In particular, we use the mean-squared imitation loss
\begin{equation*}
    J(\cD_t) = \frac{1}{M_t} \sum_{j=1}^{M_t} \frac{1}{N_t^j} \sum_{k=0}^{N_t^j-1} \|u^j_{k}- \hat{u}_k^j \|_2^2.
\end{equation*}

\noindent\textbf{Lyapunov descent condition:}\label{sec:lyap_reg}
We consider quadratic functions $V$ and $S$ given by
\begin{subequations} \label{eq:quad_V_S}
\begin{align}
    V(\tilde{\xi}) &= \| \tilde{\xi} \|_P^2 + \mu_V \| \tilde{\xi} \|_2^2, \\
    S(\tilde{\xi}) &= \| \tilde{\xi} \|_Q^2 + \mu_S \| \tilde{\xi} \|_2^2 + \beta_x \| \tilde{x} \|^2_2 + \beta_z \| \tilde{z} \|^2_2,
\end{align}
\end{subequations}
with $P = \psi_V^\T \psi_V$, $Q = \psi_S^\T \psi_S$, where $\psi_V \in \R^{n_V \times n_\xi}$, $\psi_S \in \R^{n_S \times n_\xi}$, and $\mu_V, \mu_S > 0$ are the trainable parameters, while $\beta_x, \beta_z > 0$ are non-trainable parameters that are included to encourage strict decrease away from~$\xi_s$.
This parametrization guarantees $P,Q \succeq 0$, which along with $\mu_V,\mu_S, \beta_x, \beta_z>0$ yields functions $V$ and $S$ satisfying Assumption~\ref{ass:ISpS}\ref{ass:ISpS:K-infty}\footnote{E.g., take $\alpha_1(r)=\mu_V r^2$, $\alpha_2(r)=(\lambda_{\max}(P)+\mu_V)r^2$, and $\alpha_3(r)=\mu_S r^2$, where $\lambda_{\max}(P)$ is the maximum eigenvalue of~$P$.}.

Since the real system states along the experiments are unknown, we enforce the non-monotone Lyapunov descent condition~\eqref{eq:Lyap_cond_2} on the estimated states $\hat{x}_k^j$ available in $\cD_t$, as a proxy for the true states $x_k^j$.
Let $\hat{\xi}_k^j = (\hat{x}_k^j, z_k^j)$, and denote
$\hat{\xi}_{k+1|k}^j \doteq F_\cS(\hat{\xi}_k^j, 0; \theta)$, $\hat{\xi}_{k+2|k}^j \doteq F_\cS(\hat{\xi}_{k+1|k}^j, 0; \theta)$
the successor states at the next two sample times for the extended state $\hat{\xi}_k^j$.
Finally, let
\begin{equation}\label{eq:Gamma}
    \begin{aligned}
        \hat \Gamma_k^j \doteq \, & \tau V(\hat{\xi}_{k+2|k}^j - \xi_s) +V(\hat{\xi}_{k+1|k}^j - \xi_s)  \\
                             & - (1+\tau)V(\hat{\xi}_k^j - \xi_s) + S(\hat{\xi}_k^j - \xi_s).
    \end{aligned}
\end{equation}
We encourage satisfaction of the non-monotone Lyapunov descent condition~\eqref{eq:Lyap_cond_2} using the loss
\begin{equation} \label{eq:loss_lyap}
    J_L(\cD_t; \psi, \mu, \tau) = \alpha_L \frac{1}{M_t} \sum_{j = 1}^{M_t} \frac{1}{N_t^j} \sum_{k = 0}^{N_t^j - 1} \max\{ 0, \hat \Gamma_k^j \},
\end{equation}
where $\psi = (\psi_V, \psi_S)$ and $\mu = (\mu_V, \mu_S)$ are the parameters of functions $V$ and $S$, and $\alpha_L > 0$ is a penalty hyper-parameter.

\noindent\textbf{Equilibrium consistency:} To promote the equilibrium consistency~\eqref{eq:equilibrium} of the surrogate with the steady-state pair $(x_s,u_s)$, we add the further penalty term
\begin{equation*} \label{eq:loss_equilibrium}
    J_s(z_s)
    {=} \alpha_s^z \| f_z(z_s, \bar{y}, \bar{y}; \theta) - z_s \|_2^2
    {+} \alpha_s^u \| f_u(z_s, \bar{y}, \bar{y}; \theta) - u_s \|_2^2,
\end{equation*}
where $\alpha_s^z, \alpha_s^u > 0$ are penalty hyper-parameters.

\noindent\textbf{Overall training problem:} Combining all the penalties introduced above, we obtain the following optimization problem for learning a practically stabilizing surrogate controller:
\begin{subequations} \label{eq:learn_prob_stabilize}
\begin{align}
    \min_{\Theta}\;& J(\cD_t; \theta) + J_L(\cD_t; \psi, \mu, \tau) + J_s(z_s) + r_\Theta(\Theta)
    \label{eq:learn_prob_stabilize:cost} \\
    \textrm{s.t.}\;&~\eqref{eq:predicted_surr}, \quad j \in \Ni{1}{M_t}, \\
    \;&~\mu_V \ge \underline{\mu}_V,\quad \mu_S \ge \underline{\mu}_S,\quad \tau \ge 0,\quad z_s \in \cZ,
\end{align}
\end{subequations}
where $\Theta \doteq (\theta, \psi, \mu, \tau, z_s)$ collects all trainable parameters of the surrogate controller~\eqref{eq:surr_control} and the Lyapunov descent condition~\eqref{eq:Lyap_cond_2}, $\underline{\mu}_V, \underline{\mu}_S > 0$ are small scalars used to enforce $\mu_V, \mu_S > 0$, and $r_\Theta(\Theta)$ is a regularization term of the trainable parameters (typically choices are the $\ell_2$ or $\ell_1$ norm).

Note that, if functions $f_z$, $f_u$, $f_\cP$ and $g_\cP$ are automatically differentiable (AD), problem~\eqref{eq:learn_prob_stabilize} can be solved by standard machine learning tools and optimization methods, such as Adam~\cite{KB14} or L-BFGS-B~\cite{BLN95}.

\begin{remark}
We chose quadratic functions $V$ and $S$ in~\eqref{eq:quad_V_S} for convenience of exposition. In general, any functions satisfying Assumption~\ref{ass:ISpS}\ref{ass:ISpS:K-infty} are valid.
\end{remark}

\begin{remark}\label{rem:lyap_subset}
In practice, we find that it is not necessary to include all $N_t^j$ steps in~\eqref{eq:loss_lyap}.
Instead, the horizons $N_t^j$ in~\eqref{eq:loss_lyap} may be replaced by $N_L^j \doteq \min\{N_t^j,N_L\}$, where $N_L$ is a hyper-parameter chosen according to the settling time of the expert closed-loop system, so that the trajectories have reached a vicinity of the target $\bar{y}$. This retains the informative transient portion of the data while avoiding unnecessary Lyapunov evaluations near steady state. The cost can be further reduced by evaluating~\eqref{eq:loss_lyap} only on a subset $M_s \leq M_t$ of the trajectories, selected by greedy farthest-point sampling of the initial estimates~\cite{XB24}, e.g., choose $i_1 \in \Ni{1}{M_t}$ and $i_t \in \arg\max_{j \in \Ni{1}{M_t}\setminus\{i_1,\ldots,i_{t-1}\}} \min_{r \in \Ni{1}{t-1}} \|\hat{x}_0^j-\hat{x}_0^{i_r}\|_2$ for $t=2,\ldots,M_s$.
\end{remark}

\section{Numerical results} \label{sec:results}
\subsection{Plant description} \label{sec:results:plant}
To validate the proposed methodology, we consider the CSTR benchmark from~\cite[pp.~559--563]{Beq98}.
The system dynamics describe the evolution of the state vector 
$x_k = [T_k, C_{A,k}]^\top$, where $T_k$ [\SI{}{K}] is the reactor temperature and $C_{A,k}$ [\SI{}{\kilo\mol\per\meter\cubed}] is the concentration of reactant A at sample time $k$. 
The model parameters are taken from Case~2 in~\cite[p.~563]{Beq98}. 
We obtain~\eqref{eq:plant} by applying Euler integration, with a sampling time of $T_s=\SI{0.5}{\hour}$, on the following continuous-time dynamics:
\begin{align*}
        \frac{d T}{d t} &= \frac{F}{V}(T_f - T(t)) + \frac{-\Delta H}{\rho c_p} r(t) - \frac{UA}{V\rho c_p}(T(t) - T_j(t)), \\
        \frac{d C_{A}}{d t} &= \frac{F}{V}(C_{Af} - C_{A}(t)) - r(t),
\end{align*}
where the reaction rate is $r(t) = k_0 e^{- \Delta E/RT_k}C_{A}(t)$ and the output of the system is $C_{A}$, i.e., $g_\cP(x_k) = C_{A, k}$.
The control input is the jacket temperature $T_{j}(t)$, constrained to
$T_{j}(t)\in[285.15,312.15]$.
The state-constraint set is $\cX \doteq \{x=[T,C_A]^\top : T\in[310,400],\, C_A\in[0,8]\}$. 
The objective is to steer the output $C_A$ to the target $\bar{y} = 3.5$, whose steady-state~\eqref{eq:steady_state} is given by $x_s = (361.0, 3.5)$ and $u_s = 292.2$.

\subsection{Expert NMPC-EKF controller}\label{sec:results:expert_architecture}

We take an NMPC and EKF as the expert controller $\ck$ and observer $\cO$~\eqref{eq:expert}.
In particular, we consider the NMPC problem~
\begin{subequations} \label{eq:NMPC}
\begin{align}
    \min_{U_k} \quad & \sum_{i=0}^{H-1} \left( \|y_{k+i|k} - \bar{y}\|^2_{W_y} + \|u_{k+i|k} - u_s\|^2_{W_u} \right) \\
    \text{s.t.} \quad & \hat{x}_{k+i+1|k} = f_\cP(\hat{x}_{k+i|k}, u_{k+i|k}), \; i \in \Ni{0}{H-1}, \\
    & y_{k+i|k} = g_\cP(\hat{x}_{k+i|k}), \; i \in \Ni{0}{H-1}, \\
    & u_{k+i|k} \in \cU, \; i \in \Ni{0}{H-1}, \\
    & \hat{x}_{k+i|k} \in \cX, \; i \in \Ni{0}{H},
\end{align}
\end{subequations}
where $U_k = \{u_{k|k}, \dots, u_{k+H-1|k}\}$ is the sequence of control inputs over the prediction horizon $H$, and $\hat{x}_{k+i|k}$ is the predicted state at time $k+i$ based on information at time $k$. The weighting matrices $W_y \succeq 0$ and $W_u \succ 0$ define the tracking and input penalties, and the initial state of the prediction, $\hat{x}_{k|k}$, is provided by the EKF observer $\cO$.
Consistently with the timing convention in~\eqref{eq:expert}, given $u_{k-1}$ and $y_k$, the standard EKF updates are given by
\begin{subequations}
\begin{align}
    \hat{x}_{k|k-1} &= f_\cP(\hat{x}_{k-1|k-1}, u_{k-1}), \\
    P_{k|k-1} &= A_{k-1} P_{k-1|k-1} A_{k-1}^\top + Q_x, \\
    K_k &= P_{k|k-1} C_k^\top (C_k P_{k|k-1} C_k^\top + R_y)^{-1}, \\
    \hat{x}_{k|k} &= \hat{x}_{k|k-1} + K_k \bigl(y_k - g_\cP(\hat{x}_{k|k-1})\bigr), \\
    P_{k|k} &= (I - K_k C_k) P_{k|k-1}(I - K_k C_k)^\top + K_k R_y K_k^\top,
\end{align}
\label{eq:EKF}
\end{subequations}
where $A_{k} \doteq \nabla_x f_\cP(\hat{x}_{k|k}, u_{k})$ and $C_k \doteq \nabla_x g_\cP(\hat{x}_{k|k-1})$ are the Jacobians of the system dynamics and output function. The matrices $Q_x$ and $R_y$ denote the process and measurement noise covariances, respectively, and the pair $(\hat{x}_{k|k},P_{k|k})$ forms the observer state~$\chi_k$.

For the NMPC, we take the prediction horizon $H = 10$, $W_y = 1$ and $W_u = 0.1$. For the EKF, we take $Q_x = 10^{-2} I$ and $R_y = 1$. 
The measurement noise $\eta_k$ is sampled from a zero-mean Gaussian distribution with standard deviation $0.02$, truncated to the interval $[-0.04,0.04]$.

\subsection{Initialization controller} \label{sec:results:initialization}

For convenience and ease of exposition, we take the controller~$\ck_0$ of the initialization procedure~\eqref{eq:init_control} as the same NMPC~\eqref{eq:NMPC} and EKF~\eqref{eq:EKF}.

The initialization procedure is taken as follows.
First, a random state is drawn from the operating region $\hat{\cX}_0$ shown in Fig.~\ref{fig:cstr_state_space_h_1}, and a temporary target $\bar{y}_0$ is drawn uniformly from $\bar{\cY}_0 \doteq [2.4, 4.6]$.
The region $\hat{\cX}_0$ is chosen to contain the steady states associated with targets in $\bar{\cY}_0$.
We then run the initialization procedure, taking the initial state of the EKF as $\hat{x}_{-M|-M} = x_s$ and $P_{-M|-M} = I$, where $M = 30$ is the length of the initialization procedure.
We note that $M = 30$ is larger than the settling time of the closed-loop system~\eqref{eq:plant}-\eqref{eq:init_control}, meaning that when $k = 0$, the plant is close to the steady state corresponding to $\bar{y}_0$ and the estimated state $\hat{x}_{0|0}$ is a good estimate of the real system state $x_0$, as also considered in~\cite{KAL21}.

\subsection{Training the surrogate controller} \label{sec:results:training}
To capture local linear behavior around $(x_s, u_s)$ while~retaining expressive nonlinear corrections, $\hat f_z$, $\hat f_u$~\eqref{eq:surr_structure} are taken as the sum of an affine map and a Feedforward NN~(FNN):
\begin{subequations}
\begin{align}
    \hat f_z(v_k; \theta_z) &= W_z v_k + b_z + \psi_z(v_k), \label{eq:nn_fz1}\\
    \hat f_u(v_k; \theta_u) &= W_u v_k + b_u + \psi_u(v_k). \label{eq:nn_fu1}
\end{align}
\end{subequations}
where $W_z$, $b_z$, $W_u$, $b_u$ have appropriate dimensions, $\psi_z$ and $\psi_u$ are FNNs with hyperbolic tangent ($\tanh$) activation functions and three hidden layers of widths $(20,16,12)$ and $(16,12,8)$, respectively, and we recall that $v_k \doteq (z_k, y_k, \bar{y})$.

The training uses a dataset $\cD_t$ with $M_t = 100$ closed-loop trajectories of length $N_t=80$
\footnote{For informativeness, some initial states and regulation targets are sampled slightly outside $\hat{\cX}_0$ and $\bar{\cY}_0$, respectively, while remaining inside physically meaningful and safe operating bounds. This is useful from a practical point of view, as it improves performance close to the boundary of $\cX_0$ by training on points that are slightly outside of $\cX_0$.}.
Following Remark~\ref{rem:lyap_subset}, we greedily select $M_s=40$ trajectories and take $N_L=40$ for the Lyapunov regularization.
We note that $N_L = 40$ is longer than the settling time of the closed-loop system~\eqref{eq:plant}-\eqref{eq:expert}.

Problem~\eqref{eq:learn_prob_stabilize} is solved with \texttt{jax-sysid}~\cite{Bem25} using $6000$ Adam epochs followed by $4000$ L-BFGS epochs. The equilibrium-regularization weights are $\alpha_s^z=10^4$ and $\alpha_s^u=10^3$. The Lyapunov-regularization weight is $\alpha_{L}=10^4$, with $\underline{\mu}_V=\underline{\mu}_S=10^{-4}$, $\beta_x=10^{-3}$, and $\beta_z=10^{-4}$.
We take the parameter regularization term as 
\begin{align*}
r_\Theta(\Theta)
=
10^{-6}
( &
\|\theta\|_2^2 + \|z_s\|_2^2
+ \|\psi_V\|_F^2 + \|\psi_S\|_F^2 \\
      &+ |\mu_V|^2 + |\mu_S|^2 + |\tau|^2 ).
\end{align*}

For numerical conditioning during training, the temperature states are scaled as $\tilde{T} \doteq (T-300)/10$, while the concentration state is left unscaled.
The neural networks and Lyapunov functions use the scaled temperature, whereas the plant model, constraints, and trajectory plots are reported in unscaled units.

To reduce sensitivity to local minima in~\eqref{eq:learn_prob_stabilize}, we employ a parallel fitting approach where we train $8$ surrogate models from different random initializations, and select the one with the smallest closed-loop validation error over a set of $M_v = 8$ additional closed-loop trajectories of~\eqref{eq:plant}--\eqref{eq:expert}.
We define the validation error as
$J_{\mathrm{val}} = \frac{1}{M_v} \sum_{j=1}^{M_v} \frac{1}{N_t} \sum_{k=0}^{N_t-1} \| y_k^j - \hat{y}_k^j \|_2^2$.

The training is implemented in Python 3.12 and executed on an Intel(R) Xeon(R) W-2245 CPU @ 3.90GHz with 125 GB RAM.
Solving the offline learning problem~\eqref{eq:learn_prob_stabilize} required \SI{32}{\minute}.
A Python implementation of the proposed algorithms and numerical examples is available at \url{https://github.com/QuinnXie/CtrlFit}.

\begin{figure}
    \centering
    \includegraphics[width=1.0\columnwidth]{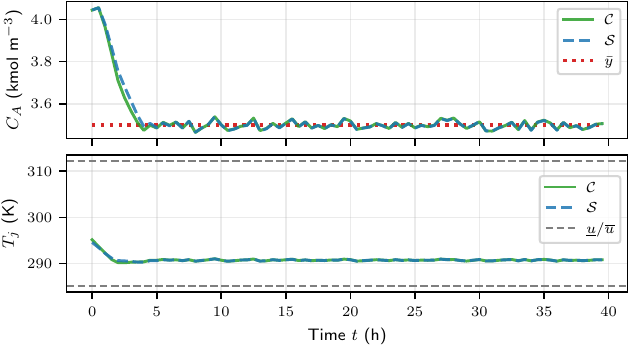} 
    \caption{Closed-loop trajectories of the expert controller $\mathcal{C}$~\eqref{eq:expert} (green lines) and the learned surrogate controller $\cS$~\eqref{eq:surr_control} (blue dashed lines) from the same initial condition with $C_{A,0} = 4.04$. Top: reactant concentration output $C_A$ and reference $\bar{y}$ (red dotted line). Bottom: jacket temperature input $T_j$ and input bounds (gray dashed lines).}
    \label{fig:cstr_stable_cmp}
\end{figure}

\subsection{Closed-loop performance and stability assessment} \label{sec:results:results}

Fig.~\ref{fig:cstr_stable_cmp} compares the closed-loop trajectories of the expert NMPC-EKF controller $\cC$ and the trained surrogate controller $\cS$, starting from the same initial condition and subject to the same sequence of measurement noise $\eta_k$. Both controllers successfully steer the system to the target equilibrium.
Additionally, the surrogate controller closely mimics the expert trajectory while reducing online computation time by more than 3000x relative to $\cC$ solved by \texttt{L-BFGS-B}, as shown in Table~\ref{tab:cstr_comp_time}.

\begin{table}[t]
\centering
\caption{CPU time for surrogate ($\cS$) and expert ($\cC$) controllers.}
\label{tab:cstr_comp_time}
\begin{tabular}{lcc}
\hline
Time [ms] & $\cS$ & $\cC$ \\
\hline
average & 0.0567 & 193.0 \\
maximum & 0.0678 & 210.7 \\
\hline
\end{tabular}
\end{table}

\begin{figure}[t]
    \centering
    \includegraphics[width=1.0\columnwidth]{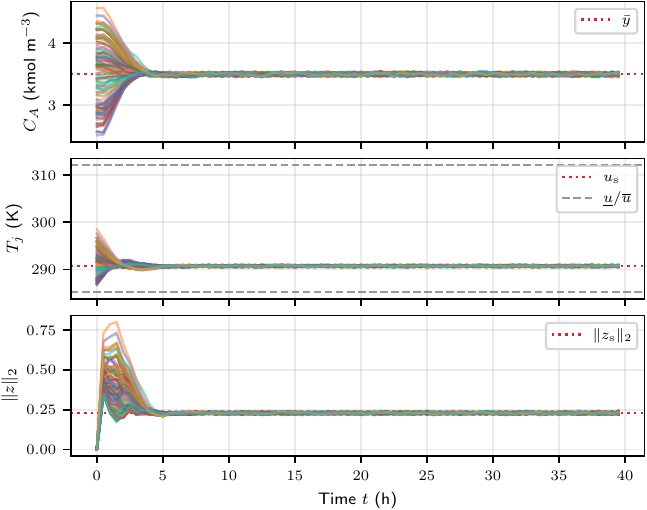} 
    \caption{Closed-loop trajectories of the learned surrogate controller from 100 initial states in $\cc{X}_0$. Top: reactant concentration output $C_A$ with reference $\bar{y}$ (red dotted line). Middle: jacket temperature input $T_j$, with the steady-state input $u_s$ (red dotted line) and the input bounds (gray dashed lines). Bottom: $\ell_2$-norm $\|z\|_2$ of the internal state $z$ of the surrogate controller $\cS$, with the steady-state value $\|z_s\|_2$ (red dotted line).}
    \label{fig:cstr_stable_eq}
\end{figure}

\begin{figure}[t]
    \centering
    \includegraphics[width=1.0\columnwidth]{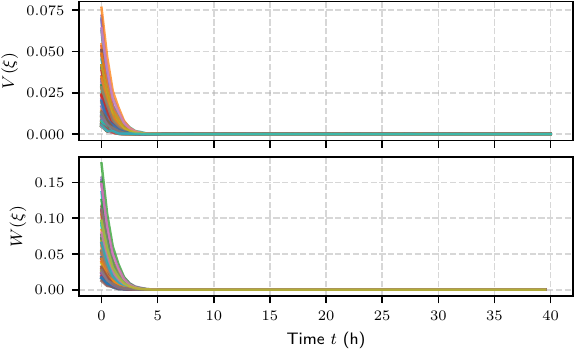}
    \caption{Evolution of the Lyapunov quantities along the closed-loop trajectories of the learned surrogate controller shown in Fig.~\ref{fig:cstr_stable_eq}. Top: Lyapunov function $V(\tilde{\xi})$. Bottom: auxiliary Lyapunov function $W(\tilde{\xi}; \eta)$, see~\eqref{eq:one_step_Lyap_W_noise}.}
    \label{fig:cstr_stable_v_w}
\end{figure}

Fig.~\ref{fig:cstr_stable_eq} illustrates the performance of $\cS$ starting from $100$ new random initial states, i.e., not used during training.
In all cases, $\cS$ drives the system to a neighborhood of the target equilibrium, illustrating that the learned surrogate controller provides practical stability.

Fig.~\ref{fig:cstr_stable_v_w} reports the evolution of the learned Lyapunov function $V(\tilde{\xi}_k)$ along the closed-loop trajectories shown in Fig.~\ref{fig:cstr_stable_eq}, with learned values: $\mu_V=4.80\times10^{-3}$, $\mu_S=1.96\times10^{-3}$, $\tau=0.81$.
Fig.~\ref{fig:cstr_stable_v_w} also shows the evolution of the auxiliary Lyapunov function $W(\tilde{\xi}_k; \eta_k)$ used in the proof of Theorem~\ref{theo:ISpS}.
While $V(\tilde{\xi}_k)$ may increase at isolated time steps under the two-step Lyapunov condition, $W(\tilde{\xi}_k; \eta_k)$ decreases monotonically during the transient. After around $t=\SI{5}{\hour}$, both functions remain near~$0$. This behavior is consistent with the ISpS result presented in Theorem~\ref{theo:ISpS}.

\subsection{Probabilistic validation} \label{sec:results:validation}
We now use the \emph{a posteriori} probabilistic validation approach from~\cite{KAL21} to show that constraint satisfaction and the ISpS condition~\eqref{eq:V_ISpS_cond} hold with high probability.
The procedure is based on measuring the values of two performance indicator functions $\phi_i$, $i \in \{L, \cX\}$ (see definitions below) for $N_p$ closed-loop experiments.
A probabilistic bound on the performance indicators can then be provided based on the empirical worst-case results obtained in the experiments.

Let $w_j$ be a random vector containing all the random variables that uniquely determine the outcome of the closed-loop experiment $j \in \N_1^{N_p}$.
In our setting, $w_j$ includes the random variables that affect the initialization procedure ($x_{-M}^j$, $\bar{y}_0^j$, etc.), as well as the values of the measurement noise $\eta_k^j$.
We denote by $\phi_i(w_j)$ the value of the performance indicator $i$ obtained in experiment~$j$.

We evaluate satisfaction of the ISpS condition~\eqref{eq:V_ISpS_cond} using
\begin{equation}\label{eq:prob_validation_phi_lyap}
    \phi_{L}(w_j) \doteq \max_{k \in \Ni{0}{N_t-2}} \left( \Gamma_k^j - \lambda_\eta\|(\eta_{k+1}^j, \eta_k^j)\| - c
    \right),
\end{equation}
where
$\Gamma_k^j \doteq \tau V(\tilde{\xi}_{k+2}^j) + V(\tilde{\xi}_{k+1}^j) - (1+\tau)V(\tilde{\xi}_{k}^j) + S(\tilde{\xi}_k^j)$.
Here, $\lambda_\eta > 0$ is the disturbance gain of the $\alpha_\eta$ function and $c \geq 0$ is the practical-stability residual.
Note that $\phi_{L}(w_j) \leq 0$ is equivalent to
$ \Gamma_k^j \leq \lambda_\eta\|(\eta_{k+1}^j, \eta_k^j)\| + c$, $\forall k \in \Ni{0}{N_t-2}$, i.e., the ISpS descent condition~\eqref{eq:V_ISpS_cond} holds throughout experiment~$j$.

We also evaluate constraint satisfaction by considering the performance indicator function
\begin{equation} \label{eq:prob_validation_phi_safe}
    \phi_{\cX}(w_j)
    \doteq
    \max_{k \in \Ni{0}{N_t}}
    \max_{i \in \Ni{1}{n_{\cX}}}
    \left(G_{\cX,i} x_k^j - h_{\cX,i}\right),
\end{equation}
where $n_\cX$ is the number of inequalities defining 
$\cX=\{x\in\R^{n_x}:G_\cX x\le h_\cX\}$, 
$G_{\cX,i}$ is the $i$-th row of $G_\cX$, and 
$h_{\cX,i}$ is the $i$-th entry of $h_\cX$.
Satisfaction of $\phi_\cX(w_j) \leq 0$ implies that the state constraints were satisfied throughout experiment $j$.

We take $\lambda_\eta=1$, $c=0$, and the parameters of the probabilistic validation procedure as $\epsilon = 10^{-4}$, $\delta = 10^{-6}$ and $r = 1$ (see~\cite[Section 3]{KAL21} for details), meaning that we must take $N_p \geq 145087$. 
We evaluated the performance indicators~\eqref{eq:prob_validation_phi_lyap} and~\eqref{eq:prob_validation_phi_safe} on $N_p=150000$ closed-loop experiments, using i.i.d.\ random variables $w_j$.
The maximum values of the performance indicators in the $N_p$ experiments are $\max_j \phi_{L}(w_j) = -2.0\times10^{-5}$ and $\max_j \phi_\cX(w_j) = -2.5$.
The probabilistic validation procedure from~\cite{KAL21} then gives the following result: with confidence at least ${1 - \delta} = 99.9999\%$, the values of $\phi_{L}(w)$ and $\phi_\cX(w)$ of a new experiment $w$ satisfy $\mathbb{P}\{\phi_{L}(w) {>} -2.0\times10^{-5}\} {\leq} 10^{-4}$ and $\mathbb{P}\{\phi_\cX(w) {>} -2.5\} {\leq} 10^{-4}$.
Equivalently, since the validated levels of $\phi_\cX$ and $\phi_{L}$ are negative and $c=0$, the learned surrogate controller provides constraint satisfaction and an ISpS closed-loop system with high probability.
This result highlights the effectiveness of the proposed method for all practical engineering purposes.

\begin{figure}
    \centering
    \includegraphics[width=1.0\columnwidth]{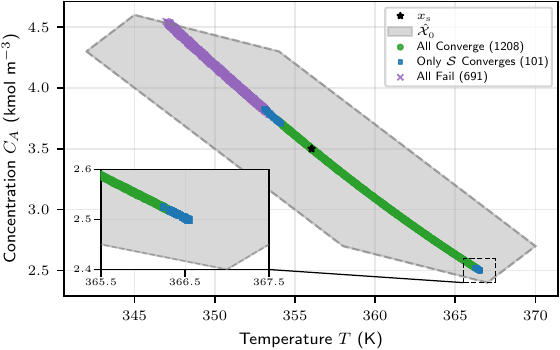} 
    \caption{Comparison of closed-loop stabilization outcomes for the short-horizon case $H=1$, evaluated from 2000 random initial states. The gray shaded polygon denotes the initial-state set $\hat \cX_0$ used in the initialization procedure, and the black star marks the target equilibrium state $x_s$. Green circles indicate initial conditions for which all controllers converge, blue squares indicate convergence of $\cS$ only, and purple crosses indicate initial conditions for which all controllers fail to track the reference.}
    \label{fig:cstr_state_space_h_1}
\end{figure}

\begin{figure}
    \centering
    \includegraphics[width=1.0\columnwidth]{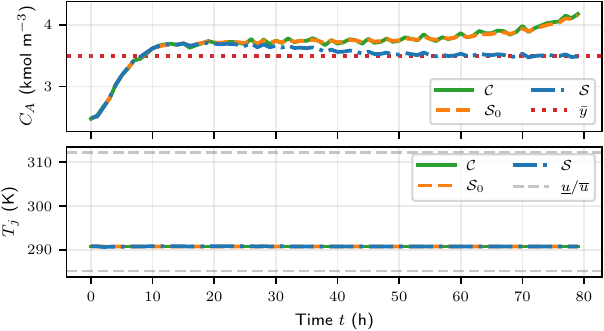} 
    \caption{Closed-loop comparison for the short-horizon case $H=1$ from one representative initial condition with $C_A=\SI{2.5}{\kilo\mol\per\meter\cubed}$. 
    $\mathcal{C}$ (green lines), $\cS_0$ (orange dashed lines), and $\cS$ (blue dash-dot lines) are compared. 
    Top: concentration output $C_A$, with the reference $\bar{y}$ (red dotted line). Bottom: jacket temperature input $T_j$, with the input bounds (gray dashed lines).}
    \label{fig:cstr_traj_h_1}
\end{figure}

\subsection{Role of the Lyapunov regularization}
To highlight the role of the equilibrium and Lyapunov regularization terms~$J_s$ and~$J_L$ of~\eqref{eq:learn_prob_stabilize}, we now consider the use of an expert controller~$\cC$ whose NMPC has a prediction horizon of $H = 1$.
We train the surrogate controller~$\cS$ using the same approach described in Section~\ref{sec:results:training}.
Additionally, we train a surrogate controller $\cS_0$ using the same procedure, but removing the regularization terms~$J_s$ and~$J_L$, i.e., we only include the imitation loss term $J(\cD_t; \theta)$ and the parameter regularization term $r_\Theta$.
We note that the initialization procedure still uses the NMPC with a prediction horizon of $H = 10$, to ensure proper initialization.

Fig.~\ref{fig:cstr_state_space_h_1} reports the comparison over 2000 random initial states.
All three controllers stabilize the system in 1208 cases; in 101 cases only $\cS$ succeeds; and all controllers fail in the remaining 691 cases.
We note that this last result is natural, given the short horizon $H = 1$.
The figure shows that the stabilization region of $\cS$ is slightly larger than that of the expert controller, while $\cS_0$ generally mimics the expert. 

Fig.~\ref{fig:cstr_traj_h_1} shows a representative closed-loop trajectory from an initial condition marked by a blue square in Fig.~\ref{fig:cstr_state_space_h_1}. Controllers $\cC$ and $\cS_0$ exhibit similar transient behavior, since $\cS_0$ is trained only by imitation, both failing to reach the target $\bar{y}$ within the finite simulation horizon.
On the other hand, $\cS$ converges to the target,
showing that the inclusion of the Lyapunov-related terms shapes the surrogate dynamics beyond pointwise imitation, towards practical stability.
This motivates the use of the proposed surrogate controller to imitate a well-performing expert controller-observer pair that is not necessarily stabilizing by design.

\section{Conclusions} \label{sec:conclusions}

We proposed a framework to learn a practically-stabilizing surrogate of a given nonlinear controller-observer design. The surrogate is represented as a recurrent output-feedback controller deployed using measured outputs and is trained by combining input/output imitation, equilibrium regularization, and Lyapunov-based regularization.
The nonlinear CSTR example showed that the learned surrogate can closely reproduce the expert NMPC-EKF behavior with much lower online complexity, while improving over an imitation-only baseline, and probabilistically providing constraint satisfaction and ISpS. The comparison with an imitation-only baseline indicates that the Lyapunov-based regularization can bias the learned surrogate toward practically stabilizing closed-loop behavior. Future work will address reference tracking, uncertainty-aware and adaptive extensions, and stronger safety and invariance certificates.

\bibliographystyle{ieeetr}
\bibliography{surrogate_stable_tac}

\end{document}